\newtheorem{prop}{Proposition}
\newtheorem{lemma}{Lemma}
\newif\ifcompile
\DeclareMathOperator{\tr}{tr}
\newcommand{\vf}[1]{{\bm{#1}}}
\newcommand{\mf}[1]{{\mathbf{#1}}}
\newcommand{\bgamma}{{\bm\gamma}}
\newcommand\defeq{\stackrel{\text{\tiny def}}{=}}
\title{\LARGE \bf
A Convex Formulation of Frictional Contact between Rigid and Deformable Bodies
}
\author{Xuchen Han, Joseph Masterjohn, Alejandro Castro
\thanks{All the authors are with the Toyota Research Institute, USA, {\tt\small
firstname.lastname@tri.global}.}%
}
\begin{document}

\maketitle
\thispagestyle{empty}
\pagestyle{empty}

\begin{abstract}
We present a novel convex formulation that models rigid and deformable bodies
coupled through frictional contact. The formulation incorporates a new
corotational material model with positive semi-definite Hessian, which allows us
to extend our previous work on the convex formulation of compliant contact to
model large body deformations. We rigorously characterize our approximations and
present implementation details. With proven global convergence, effective
warm-start, the ability to take large time steps, and specialized sparse
algebra, our method runs robustly at interactive rates. We provide validation
results and performance metrics on challenging simulations relevant to robotics
applications. Our method is made available in the open-source robotics toolkit
Drake.
\end{abstract}

\section{Introduction}

Robotics is on the cusp of a new era. Cutting-edge advancements in machine
learning, computational power, and manufacturing techniques are pushing the limits
of hardware design, sensing, modeling, and control.

The development of soft robotic systems \cite{bib:schmitt2018soft}, deformable
object manipulation capabilities \cite{bib:zhu2022challenges}, and vision-based
tactile sensors \cite{bib:kuppuswamy2020soft, bib:agarwal2021simulation} has
opened up new possibilities for robotics applications in controlled industrial
environments and unstructured home environments. However, these new systems
present challenges in modeling and simulation, particularly when the traditional
rigid-body assumptions of robotic simulators are no longer valid
\cite{bib:zhu2022challenges, bib:lin2021softgym}. One major challenge is the
need for unified simulators that can handle both rigid and deformable objects
that interact through frictional contact \cite{bib:zhu2022challenges}. Despite
these challenges, there is still much potential for growth and innovation in the
field of robotics, and the development of open-source simulation capabilities
targeting robotic applications could play a crucial role in advancing the field.

\begin{figure}[!h]
    \centering
    \adjincludegraphics[width=1.0\columnwidth]{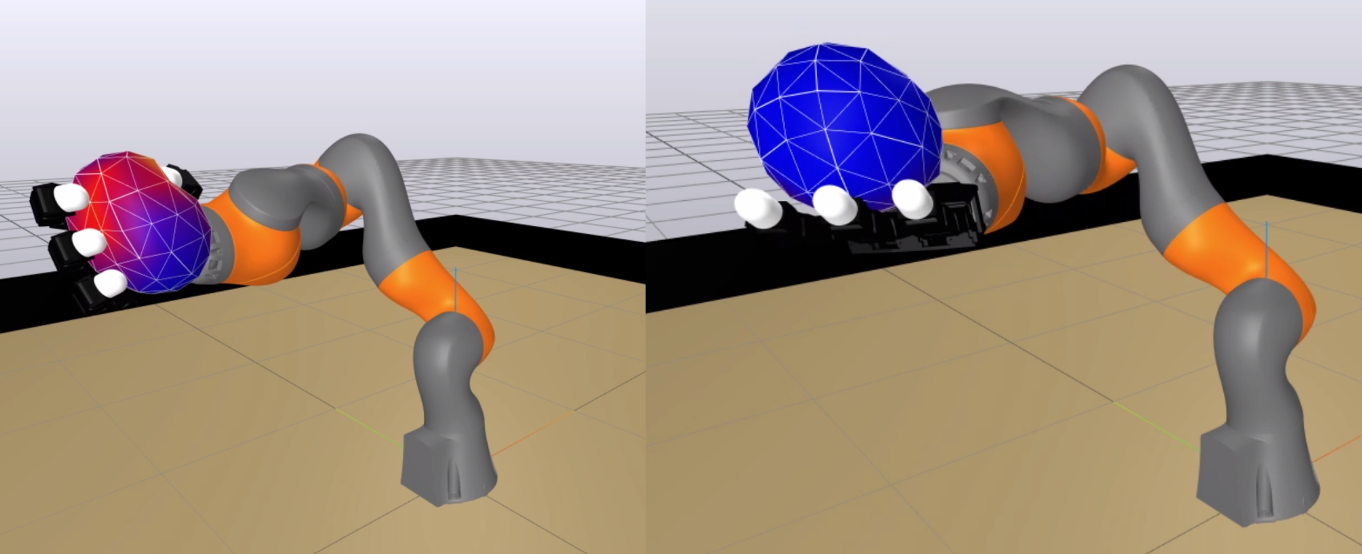}
    \caption{\label{fig:allegro} Simulated robot teleoperated to grasp of a
    deformable ball (see supplemental video). Contours colored by strain
    visualize the state of deformations of the ball. Our method resolves the
    frictional contact between the rigid hand and the deformable ball, enabling
    a secure grasp at interactive rates.}
\end{figure}
\section{Previous Work}

Early work in the finite elements community focused on the mathematical
formalism and model fidelity of methods for the accurate computation of stresses
and deformations at contacting surfaces. Node-on-segment approaches
\cite{bib:hallquist1985sliding, bib:chaudhary1986solution,
bib:bathe1997constraint} enforce contact such that nodes on a contactor surface
do not penetrate their opposing target segments (or facets in three dimensions).
Two-pass approaches iterate this process by reversing the role of
contactor/target surfaces \cite{bib:bittencourt1998}. However, these methods
proved to lack robustness and lead to locking due to over-constraint. To address
these problems, the mortar formalism \cite{bib:MadayOriginalMortarWork1988} was
introduced in segment-to-segment approaches for contact
\cite{bib:bib:mcdevitt2000mortar, bib:puso2004mortar}.

The computer graphics community has contributed significantly, particularly on
performant methods for visually realistic simulations. Projected Gauss Seidel
(PGS) variants \cite{bib:muller2007position, bib:shinar2008two} are popular
given they can be easily parallelized. However, they lack robustness due to
convergence issues \cite{bib:erleben2007velocity}. In the pursuit of increased
robustness, optimization based solvers were developed
\cite{bib:Kaufman2008,bib:gast2015optimization,bib:li2020ipc}. Work on elements
that can recover from inversion \cite{bib:irving2004invertible,
bib:teran2005robust, bib:stomakhin2012energetically} and on corotational models
\cite{bib:mcadams2011efficient, bib:muller2004interactive} further increased the
robustness of these formulations at large deformations.

Multibody dynamics with frictional contact is complicated by the non-smooth
nature of the solutions. Non-existence of solutions \cite{bib:baraff1993issues},
exponential worst-case complexity \cite{bib:baraff1994fast}, and NP-hardness
\cite{bib:Kaufman2008} have led the community to search alternative
formulations. To improve computational tractability, Anitescu introduces a
\textit{convex approximation} of the contact problem \cite{bib:anitescu2006}.
For a strictly convex formulation with unique solution, \cite{bib:todorov2014}
introduces regularization. A convex formulation of compliant contact is
presented in \cite{bib:castro2022unconstrained} and later extended to allow
continuous surface patches in \cite{bib:masterjohn2022velocity}.

Several open-source software options support deformable object modeling. MuJoCo
\cite{bib:todorov2012mujoco} simulates deformable bodies as a group of connected
rigid bodies with spring-dampers. Chrono \cite{bib:chrono2016}  and Siconos
\cite{bib:acary2019siconos} implement the Finite Element Method (FEM), though
mostly target large scale applications, such as granular media simulation. Game
engines like Bullet \cite{bib:bullet} and PhysX \cite{bib:physx} are popular in
reinforcement learning. SOFA \cite{bib:sofa2012}, initially designed for virtual
reality, has also been successful in soft robotics projects.
\section{Outline and Novel Contributions}

To the knowledge of the authors, this work presents the first convex formulation
for the modeling of bodies undergoing large deformations coupled with
articulated rigid bodies through frictional contact and holonomic constraints.

We make a number of contributions. We introduce a novel linear corotational
model of large deformations with a positive semi-definite Hessian in
Section~\ref{sec:corotational_model}. Section~\ref{sec:convex_formulation} uses
this model to extend our previous work on the convex formulation of compliant
contact \cite{bib:castro2022unconstrained} to model deformable bodies.
Sections~\ref{sec:constraint_resolution} and \ref{sec:schur_complement} show how
to exploit sparsity and reuse intermediate computations in our Cholesky
factorizations to write a much smaller contact problem. We present validation
and simulation results relevant to robotics in Section \ref{sec:results} along
with performance metrics. Finally, as part of the open-source robotics toolkit
Drake \cite{bib:drake}, we implement our method with
exhaustive testing and thorough documentation.

\section{Mathematical Formulation}

The state of our system is described by generalized positions $\mf{q} \in
\mathbb{R}^{n_q}$ and generalized velocities $\mf{v} \in \mathbb{R}^{n_v}$ where
$n_q$ and $n_v$ denote the total number of generalized positions and velocities,
respectively. Time derivatives of the configurations are related to the
generalized velocities by $\dot{\mf{q}} = \mf{N}(\mf{q})\mf{v}$, with
$\mf{N}(\mf{q}) \in \mathbb{R}^{n_q \times n_v}$ being the kinematic map. In
particular, we use joint coordinates to describe articulated rigid bodies and
FEM to spatially discretize deformable bodies. As a result, for deformable
bodies, $\mf{q}$ and $\mf{v}$ are the stacked positions and velocities of the
mesh vertices, and the kinematic map $\mf{N}$ is the identity.

\subsection{Kinematics of Constraints}
\label{sec:constraint_kinematics}

We consider $n_c$ constraints described at the velocity level by the constraint
velocity $\mf{v}_{c,i}\in\mathbb{R}^{r_i}$, with $r_i$ the number of equations
for the $i\text{-th}$ constraint. For contact constraints, $\mf{v}_{c,i}$
corresponds to the relative velocity vector at the $i\text{-th}$ contact pair
and $r_i=3$, see \cite{bib:castro2022unconstrained}. For a holonomic constraint
described by $\mf{g}_i(\mf{q},t)=0$, $\mf{v}_{c,i} = d\mf{g}_i/dt$. We can write
constraint velocities in terms of the generalized velocities of the system
(including both rigid and deformable body degrees of freedom) as
$\mf{v}_{c,i}=\mf{J}_i\mf{v} + \vf{b}_i$, with $\mf{J}_i$ and
$\vf{b}_i=\partial\mf{g}_i/\partial t$ the Jacobian and bias, respectively.
Collecting all constraints, we define the stacked constraint velocity
$\mf{v}_{c}=\mf{J}\mf{v} + \mf{b}$, of size $r=\sum r_i$, where $\mf{J}$ and
$\mf{b}$ are the stacked Jacobian and bias term for all constraints.

\subsection{Discrete Time Stepping Scheme}
\label{sec:discrete_time_stepping}

We follow the scheme and notation in
\cite{bib:castro2022unconstrained} closely. We discretize time into intervals of
fixed size $\delta t$ to advance the state of the system from time $t_n$ to the
next step at $t_{n+1} = t_n + \delta t$. To simplify notation, we use the naught
subscript to denote quantities at $t_n$ while no additional subscript is used
for quantities at $t_{n+1}$. We define quantities evaluated at intermediate time
steps $t^\theta = \theta t^{n+1}+(1-\theta)t^{n}$ in accordance with the
standard $\theta\text{-method}$ using scalar parameters $\theta$ and
$\theta_{vq} \in [0, 1]$
\begin{align}
	\mf{q}^{\theta} &\defeq \theta\mf{q} + (1-\theta)\mf{q}_0,\nonumber\\
	\mf{v}^{\theta} &\defeq \theta\mf{v} + (1-\theta)\mf{v}_0,\nonumber\\
	\mf{v}^{\theta_{vq}} &\defeq \theta_{vq}\mf{v} + (1-\theta_{vq})\mf{v}_0,
	\label{eq:theta_method}
\end{align}
so that we can accommodate backward Euler with $\theta=\theta_{vq}=1$,
symplectic Euler with $\theta=0$, $\theta_{vq}=1$  and the second order midpoint
rule with $\theta=\theta_{vq}=1/2$ in the same formulation. Using these
definitions we can write the constrained dynamics of rigid and deformable bodies
within a unified framework as follows
\begin{flalign}
  &\mf{M}(\mf{q}^{\theta})(\mf{v}-\mf{v}_0) = \delta t \,
   \mf{k}(\mf{q}^{\theta},\mf{v}^{\theta}) + \mf{J}(\mf{q}_0)^T\mf{\bgamma},
   \label{eq:scheme_momentum}\\
  &\mathcal{C} \ni \bgamma \perp \mf{v}_c-\hat{\mf{v}}_c \in \mathcal{C}^*,
  \label{eq:convex_constraints}\\
  &\mf{q} = \mf{q}_0 + \delta t \mf{N}(\mf{q}^{\theta})\mf{v}^{\theta_{vq}}.
  \label{eq:scheme_q_update}
\end{flalign}

Equation~\eqref{eq:scheme_momentum} collects the momentum equations for both
rigid and deformable degrees of freedom.  For each articulated rigid body and
deformable body with $n_{v_b}$ degrees of freedom, we use $\mf{M}_b$ to denote
its mass matrix and $\mf{k}_b$ to encode external forces and gyroscopic terms
for articulated rigid bodies \cite{bib:castro2022unconstrained} and internal
forces for deformable bodies. We use $\mf{M}$ to denote the block diagonal
matrix with $\mf{M}_b$ as the diagonal blocks and $\mf{k}$ to denote the stacked
$\mf{k}_b$ for all bodies. Rigid and deformable degrees of
freedom are coupled through the constraint Jacobian $\mf{J}(\mf{q}_0)$ whenever
constraint impulses $\bgamma$ are non-zero. Following \cite{bib:mazhar2014},
Eq.~\eqref{eq:convex_constraints} encodes both holonomic constraints and the
convex approximation of contact constraints. The convex set $\mathcal{C} =
\mathcal{C}_1 \times \mathcal{C}_2 \times \cdots \times \mathcal{C}_{n_c}$ is
the Cartesian product of sets $\mathcal{C}_{i}$ for the $i\text{-th}$
constraint. For contact constraints, $\mathcal{C}_{i}$ corresponds to the
friction cone $\mathcal{F}_{i}$ for that contact
\cite{bib:castro2022unconstrained}. For bi-lateral holonomic constraints,
$\mathcal{C}_{i}=\mathbb{R}^{r_i}$ and impulses can take any real value. The
\emph{dual} of $\mathcal{C}$ is denoted with $\mathcal{C}^*$. For contact
constraints, $\hat{\mf{v}}_{c,i}$ encodes information about the contact distance
at the previous time step, see \cite{bib:castro2022unconstrained} for details.
For holonomic constraints $\hat{\mf{v}}_{c,i} = -\mf{g}_i(\mf{q}_0)/\delta t$,
where $\mf{g}_i(\mf{q}_0)$ is the constraint error at the previous time step
\cite{bib:mazhar2014}. Finally, Eq. \eqref{eq:scheme_q_update} is the positions
update in accordance to the $\theta\text{-method}$.

Next we discuss the modeling of deformable bodies and refer the reader to
\cite{bib:castro2022unconstrained} for details on the treatment of articulated
rigid bodies.

\section{Modeling Deformable Bodies}
\label{sec:deformables_modeling}
Here we describe the equations of motion for a single deformable body and drop
the $b$ subscript of $\mf{M}$ and $\mf{k}$ for simplicity.

\subsection{Preliminaries}
\label{sec:deformables_modeling}

We follow a standard formulation of continuum mechanics discretized by FEM. We
describe it briefly to provide context and introduce notation, and refer the
reader to \cite{bib:bonet2021book, bib:belytschko2014book} for further details.
Within this framework, we describe choices specific to our work.

The kinematics of the model is fully specified by the Lagrangian description map
$\bm{\phi}(\bm{X},t)$ which describes the current position of material points in
the solid as a function of their reference position $\bm{X}$ and time $t$. We
consider hyperelastic solids modeled with an energy density function
$\Psi(\mf{F}(\bm{X}))$, where $\mf{F}=\partial\bm{\phi}/\partial\bm{X}$ is the
deformation gradient. The elastic potential energy for the entire solid is given
by
\begin{equation*}
    E = \int \Psi(\mf{F}(\bm{X})) d\bm{X},
\end{equation*}
which induces elastic force $\mf{f}_e = -\partial E / \partial \mf{q}$. The
stiffness matrix is defined as $\mf{K}=-\partial\mf{f}_e/\partial\mf{q}$. We use
the Rayleigh model of damping
\begin{equation}
    \mf{f}_d(\mf{v}) = -\left(\alpha \mf{M} + \beta \mf{K} \right) \mf{v},
\end{equation}
where $\alpha$ and $\beta$ are Rayleigh damping coefficients and $\mf{M}$ is the
mass matrix from FEM discretization.
With these internal forces, $\mf{k}$ in Eq.~\eqref{eq:scheme_momentum} can be
written as
\begin{equation}
\mf{k}(\mf{v}) = \mf{f}_e(\mf{q^\theta}(\mf{v})) + \mf{f}_d(\mf{v^\theta}(\mf{v})) + \mf{f}_{\text{ext}},
\end{equation}
where $\mf{f_{\text{ext}}}$ includes constant external forces, such as a
gravity.

\subsection{Corotational Material Model}
\label{sec:corotational_model}

In principle, our framework can use any hyperelastic energy density
$\Psi(\mf{F}(\bm{X}))$. We refer to \cite{bib:sifakis2012fem} for implementation
notes and to \cite{bib:kim2020dynamic} for a survey of popular choices. However,
the stiffness matrix $\mf{K}$ can often become indefinite and hinder the
convergence of iterative Newton solvers at large deformations
\cite{bib:teran2005robust, bib:kim2020dynamic}. Standard remedies include
projecting the energy density Hessian to be positive semi-definite
\cite{bib:teran2005robust} or clamping eigenvalues to be non-negative
\cite{bib:smith2019analytic}, both associated with additional computational
cost. On the other hand, linear constitutive models introduce unacceptable
artifacts when large rotational deformation is present
\cite{bib:muller2004interactive}. 

We propose a corotational linear model that enjoys the computational efficiency
of the linear model while avoiding the rotational artifacts at the same time. In
particular, we adopt a Saint-Venant-Kirchhoff model with a corotational linear
Green strain $\hat{\mf{E}}$
\begin{equation}\label{eq:energy_density}
    \Psi(\hat{\mf{E}}) = \mu \|\hat{\mf{E}}\|_F^2 + \frac{\lambda}{2} \tr(\hat{\mf{E}})^2
\end{equation}
where $\mu$ and $\lambda$ are Lam\'{e} parameters.
To define the corotational $\hat{\mf{E}}$, we first define a
\emph{corotated deformation gradient}
\begin{equation*}
\hat{\mf{F}}(\bm{X}) \defeq \hat{\mf{R}}(\bm{X})^T \mf{F}(\bm{X})
\end{equation*}
where $\hat{\mf{R}}(\bm{X})$ is a rotation matrix to be chosen to eliminate the
rotational component of the deformation. Letting $\mf{B} = \hat{\mf{F}} -
\mf{I}$, the Green strain is given by 
\begin{align*}
    \mf{E} &= \frac{1}{2}\left(\mf{F}^T \mf{F} - \mf{I}\right) \\
    &= \frac{1}{2}\left(\hat{\mf{F}}^T \hat{\mf{F}} - \mf{I}\right) = \frac{1}{2}\left(\left(\mf{I} + \mf{B}^T\right) \left(\mf{I} + \mf{B}\right) - \mf{I}\right).
\end{align*}

By dropping the non-linear term $\mf{B}^T \mf{B}$, we arrive at the definition
of the linearized Green strain 
\begin{align*}
\hat{\mf{E}} &\defeq \frac{1}{2}(\hat{\mf{F}} + \hat{\mf{F}}^T) - \mf{I} .
\end{align*}

With $\hat{\mf{R}}=\mf{R}$, the rotational component of $\mf{F}$ from its polar
decomposition, $\hat{\mf{E}}$ becomes the strain measure used in
\cite{bib:mcadams2011efficient}. While this eliminates rotational artifacts,
the non-linear $\Psi$ can lead to indefiniteness in $\mf{K}$.
Instead, we approximate $\hat{\mf{R}}$ with $\mf{R}_0$, the rotational component
of the polar decomposition of $\mf{F}_0(\bm{X})$ at the previous time step. With
this choice, our model is linear, rotational artifacts are negligible (see results in
Section \ref{sec:results}), and the stiffness matrix $\mf{K}$ has an analytic
form guaranteed to be positive semi-definite, as proved in the Appendix. Though
similar to \cite{bib:muller2004interactive}, we do not approximate $\mf{K}$ by
dropping non-linear terms in the derivatives \cite{bib:barbic2012exact}.
Our formulation is linear in $\mf{q}$ by design and naturally generalizes to non-tetrahedral elements.

\section{Convex Formulation of Constrained Dynamics}
\label{sec:convex_formulation}

From Eq.~\eqref{eq:scheme_momentum} we define the momentum residual as
\begin{equation}
	\mf{m}(\mf{v}) = 
	\mf{M}(\mf{q}^{\theta}(\mf{v}))(\mf{v}-\mf{v}_0) -
	\delta t\,\mf{k}(\mf{q}^{\theta}(\mf{v}), \mf{v}^{\theta}(\mf{v})).
	\label{eq:m_definition}
\end{equation}
With our choice of linearized corotational model, the balance of momentum for
deformable bodies is linear in $\mf{v}$. For rigid bodies, we follow
\cite{bib:castro2022unconstrained} where inertia and damping terms are treated
implicitly for stability and Coriolis and gyroscopic forces are treated
explicitly. This allows us to write a linearized version of
Eq.~\eqref{eq:scheme_momentum} as
\begin{equation}
  \mf{m}(\mf{v_0}) + \mf{A}(\mf{v} - \mf{v_0}) = \mf{A}(\mf{v}-\mf{v}^*) = \mf{J}^T\bgamma, \label{eq:momentum_linearized}
\end{equation}
where $\mf{A}=\partial\mf{m}/\partial\mf{v}$ is symmetric positive definite
(SPD) and independent of $\mf{v}$ and we define 
\begin{equation}
	\mf{v}^* \defeq \mf{v_0} - \mf{A}^{-1}\mf{m}(\mf{v_0})
	\label{eq:free_motion_velocities}
\end{equation}
as the \emph{free-motion} velocities that would result in the absence of
constraints, when $\bgamma=\mf{0}$ in Eq.~\eqref{eq:momentum_linearized}. Since
$\mf{A}$ is block diagonal, we solve the free-motion velocities for each body
separately. We discuss efficiency considerations in Section
\ref{sec:schur_complement}. From Section \ref{sec:deformables_modeling},
the block in $\mf{A}$ for the $b\text{-th}$ deformable body is given by
\begin{equation}
    \mf{A}_b = \left(1 + \alpha \theta \delta t\right)\mf{M}_b + \theta \delta t ( \theta_{vq} \delta t + \beta) \mf{K}_b.
    \label{eq:tangent_matrix}
\end{equation}
We note that this result is exact. Moreover $\mf{A}_b$ is SPD since $\mf{M}_b$
is SPD and, with our corotational model, $\mf{K}_b$ is positive semi-definite.
We combine the linearized momentum balance \eqref{eq:momentum_linearized} with
constraints \eqref{eq:convex_constraints} to write a convex formulation of the
dynamics
\begin{align}
\begin{split}
  &\mf{A}(\mf{v}-\mf{v}^*) = \mf{J}^T\bgamma,\\
  &\mathcal{C} \ni \bgamma \perp \mf{v}_c-\hat{\mf{v}}_c \in \mathcal{C}^*.
  \label{eq:optimiality_conditions}
\end{split}
\end{align}

Equations~\eqref{eq:optimiality_conditions} are the optimality conditions to the
following convex optimization problem \cite{bib:mazhar2014,
bib:castro2022unconstrained}
\begin{equation}
	\begin{aligned}
	\min_{\mf{v}} \quad &
	\frac{1}{2}\Vert\mf{v}-\mf{v}^*\Vert_{A}^2 \\
	\textrm{s.t.} \quad & \mf{J}\mf{v} + \mf{b}-\hat{\mf{v}}_c \in \mathcal{C}^*,\\
	\end{aligned}
	\label{eq:sap_primal}
\end{equation}
where we used $\mf{v}_c = \mf{J}\mf{v} + \mf{b}$ from
Section~\ref{sec:constraint_kinematics}. To solve Eq.~\eqref{eq:sap_primal}, we
use the Semi-Analytic Primal (SAP) solver from
\cite{bib:castro2022unconstrained} that formulates a regularized version of this
problem with proven global convergence to its unique solution.

\subsection{Participating Degrees of Freedom}
\label{sec:constraint_resolution}

SAP \cite{bib:castro2022unconstrained} solves a non-linear system of
equations of size $n_v$, usually large with deformable bodies in the system. It
is often impractical to solve such a large scale non-linear system for
interactive robotics applications, so we aim to exploit the structure of the
program \eqref{eq:sap_primal} to reduce its size. 

The constraint Jacobian $\mf{J}$ is often sparse because the constraint velocity
$\mf{v}_{c,i}$ for the $i\text{-th}$ constraint only involves a local set of
vertices. Moreover, since contact develops at the surface of deformable objects,
most columns of $\mf{J}$ that correspond to inner mesh vertices are zero.
Degrees of freedom (DoFs) associated with non-zero columns in $\mf{J}$ are
herein referred to as \emph{participating} DoFs. All other DoFs are referred to
as \emph{non-participating}. For the rest of this section we work on a single
body and drop the $b$-subscript. We use $m_p$ and $m_n$ to denote the number of
participating and non-participating DoFs for this body and in general use
subscripts $p$ and $n$ for participating and non-participating quantities,
respectively. With this partition, we permute DoFs so that non-participating
DoFs appear before participating ones
\begin{gather}\label{eq:permuted_constraint_equation}
  \begin{bmatrix}  
      \mf{A}_{nn} &  \mf{A}_{np} \\
      \mf{A}_{pn}  &  \mf{A}_{pp} 
  \end{bmatrix}
  \begin{bmatrix}  
    \mf{\Delta v}_{n} \\
    \mf{\Delta v}_{p}
  \end{bmatrix}
  =
  \begin{bmatrix}
    \mf{0} \\
    \mf{J}_{p}^T \bgamma 
  \end{bmatrix}
\end{gather}
with $\mf{\Delta v} \defeq \mf{v} - \mf{v^*}$ and $\mf{J}_p$ being the columns
of $\mf{J}$ corresponding to participating DoFs of the deformable body under
consideration. Notice how this partition leads to an (often large) block of
zeros of size $m_n$ on the right hand side. This allows us to eliminate
$\mf{\Delta v}_{n}$ algebraically
\begin{equation}
 \mf{\Delta v}_{n} = -\mf{A}_{nn}^{-1} \mf{A}_{np} \mf{\Delta v}_p
 \label{eq:dv_n}
\end{equation}
to obtain a system in terms of participating DoFs only
\begin{equation}\label{schur_complement_constraint_equation}
  \hat{\mf{A}}\mf{\Delta v}_p = \mf{J}_{p}^T \bgamma 
\end{equation}
where
$\hat{\mf{A}} = \mf{A}_{pp}  -  \mf{A}_{pn} \mf{A}_{nn}^{-1} \mf{A}_{np} \in
\mathbb{R}^{m_p\times m_p}$ is the Schur complement of $\mf{A}$. Since
constraints in \eqref{eq:sap_primal} only involve participating DoFs, we can now
solve a much smaller optimization problem in terms of participating DoFs
\begin{equation}
	\begin{aligned}
	\min_{\mf{v}} \quad &
	\frac{1}{2}\Vert\mf{v}-\mf{v}_{p}^*\Vert_{\hat{A}}^2 \\
	\textrm{s.t.} \quad & \mf{J}\mf{v} + \mf{b}-\hat{\mf{v}}_c \in \mathcal{C}^*,\\
	\end{aligned}
	\label{eq:sap_primal_participating}
\end{equation}
where $\mf{v}_{p}^*$ corresponds to the free-motion velocities of participating
DoFs only. We then use \eqref{eq:dv_n} to update non-participating DoFs
velocities.

\subsection{Schur Complement Computation}
\label{sec:schur_complement}

Explicitly forming $\mf{A}_{nn}^{-1}$ needed in the Schur complement
$\hat{\mf{A}}$ is computationally expensive and impractical. However, we can
obtain $\hat{\mf{A}}$ as an intermediate result during the factorization of
$\mf{A}$ as required for the computation of free-motion velocities in
Eq.~\eqref{eq:free_motion_velocities}. We start from the factorization
of the permuted $\mf{A}$
\begin{equation}
  \begin{bmatrix}  
      \mf{A}_{nn} &  \mf{A}_{np} \\
      \mf{A}_{pn}  &  \mf{A}_{pp} 
  \end{bmatrix}
  =
  \begin{bmatrix}  
      \mf{L}_{nn} &  \mf{0} \\
      \mf{L}_{pn}  &  \mf{L}_{pp} 
  \end{bmatrix}
  \begin{bmatrix}  
      \mf{L}_{nn} &  \mf{0} \\
      \mf{L}_{pn}  &  \mf{L}_{pp} 
  \end{bmatrix}^T
\end{equation}
to see that
\begin{align*}
\hat{\mf{A}} =~& \mf{A}_{pp}  -  \mf{A}_{pn} \mf{A}_{nn}^{-1} \mf{A}_{pp} \\
= ~& \mf{L}_{pn} \mf{L}_{pn}^T +  \mf{L}_{pp} \mf{L}_{pp}^T 
-\mf{L}_{pn} \mf{L}_{nn}^T (\mf{L}_{nn} \mf{L}_{nn}^T)^{-1} \mf{L}_{nn} \mf{L}_{pn}^T \\
= ~& \mf{L}_{pp} \mf{L}_{pp}^T \\
= ~& \mf{A}_{pp} - \mf{L}_{pn} \mf{L}_{pn}^T,
\end{align*}

Therefore, we factorize the permuted $\mf{A}$ with a right-looking Cholesky
factorization and record the Schur complement matrix $\hat{\mf{A}}$ after block
$\mf{A}_{nn}$ has been factorized. Moreover, the same factorization of $\mf{A}$
is used to solve Eqs.~\eqref{eq:free_motion_velocities} and ~\eqref{eq:dv_n}.
Equations within participating and non-participating partitions can be reordered
to reduce fill-ins in the Cholesky factorization. We observe that AMD ordering
\cite{bib:amestoy1996approximate} within a partition leads to fewer fill-ins
than arbitrary ordering.

\section{Results and Discussion}
\label{sec:results}

We present a series of simulation test cases to assess the robustness, accuracy,
and performance of our method. All simulations are carried out in a system with
two 16-core Intel\textsuperscript{\textregistered}
Xeon\textsuperscript{\textregistered} Gold 6226R Processors and 192 GB of RAM,
running Ubuntu 20.04. However, all of our tests run in a single thread. We
report scene statistics and performance numbers in Table \ref{tab:statistics}
and timing breakdowns of the most time-consuming routines in
Fig.~\ref{fig:time_breakdown}. We observe that the narrow phase geometry and the
element-wise FEM computations are embarrassingly parallelizable and can
immediately benefit from a parallel implementation. On the other hand, the
computations for the SAP solver and the Schur complement, both requiring
Cholesky factorizations, are less amenable to parallelization and require
careful investigation. The SAP problem for the arch example essentially is block
tridiagonal, and SAP's supernodal solver \cite{bib:davis2016survey} can
effectively exploit this structure. However, the structure of the problem is
very close to dense for the teddy bear example, and SAP factorizations become
the bottleneck. In addition, in less dynamic scenes (such as the masonry arch
and the cylinder press example), the SAP solver benefits from its warm-start
strategy \cite{bib:castro2022unconstrained}. The scalability of SAP with the
number of bodies is studied in \cite{bib:castro2022unconstrained} and with the
number of constraints in \cite{bib:masterjohn2022velocity}. All examples use our
corotational model except for the cylinder press benchmark, which uses a model
to match published results. Unless otherwise specified, the Rayleigh damping
model uses $\alpha = 0$ and $\beta = 0.01$. For all simulation results, SAP runs
to convergence with relative tolerance $\varepsilon_r=10^{-6}$
\cite{bib:castro2022unconstrained}.

\begin{table*}[t]
  \caption{Timing and scene statistics for all examples. We account for both
   rigid and deformable DoFs, with 3 DoFs per mesh vertex. The realtime rate is
   the ratio of simulated time over CPU time. We also record the maximum number
   of constraints (including contact and holonomic) per time step as well as the
   average number over all time steps.}
  \label{tab:statistics}
  \begin{center}
  \begin{tabular}{l|c|c|c|c} 
    \toprule
    Example                          & DoFs & Time Step (sec) & Realtime Rate & Constraints Average (Max) \\
    \hline
    Arch ($\mu=0.2$)                 & 225  & 0.04            & 7.26                  & 259.5 (357)       \\
    \hline
    Arch ($\mu=1.0$)                 & 225  & 0.04            & 6.29                  & 320.2 (364)       \\
    \hline
    Cylinder (4151 elements)         & 3796 & 0.01            & 0.06                  & 248.3 (335)       \\
    \hline
    Allegro hand                     & 410  & 0.01            & 1.01                  & 89.5 (328)       \\
    \hline
    Teddy bear                       & 2043 & 0.02            & 0.37                  & 236.5 (464)       \\
    \bottomrule
  \end{tabular}
\end{center}
\end{table*}

\begin{figure}[!h]
  \centering
  \adjincludegraphics[width=1.0\columnwidth, trim={{0.07\width} 0.0 {0.07\width} 0.0},clip]{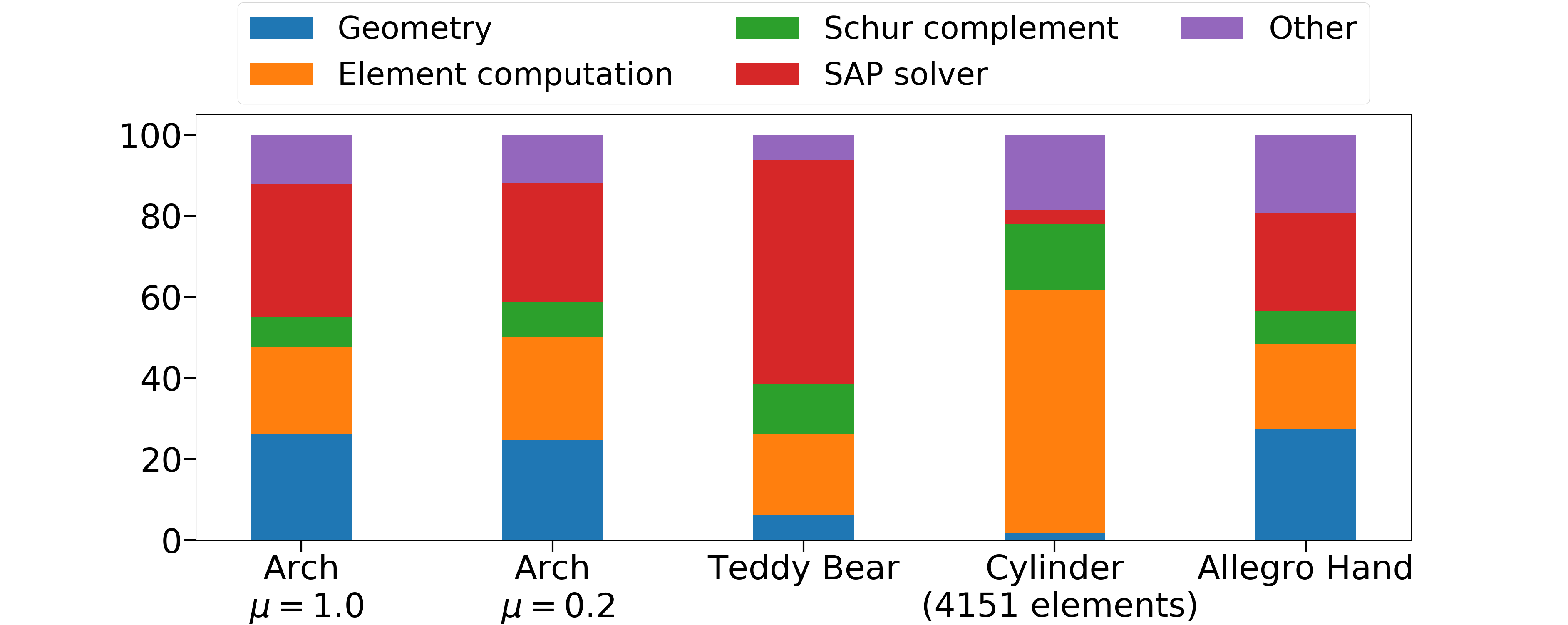}
  \caption{\label{fig:time_breakdown} Cost budget (in percentage) of the total
    run time cost for main computation routines (geometry queries, Schur
    complement, element-wise computation for FEM, and SAP solver) for all
    examples. All other computation are grouped under ``other".}
\end{figure}

\subsection{Rubber Cylinder Pressed Between Two Rigid Plates}
To validate our method in resolving contact as well as the internal stress of
deformable bodies, we perform a benchmark extensively studied in the engineering
literature, with well-known numerical solution \cite{bib:bijelonja2005finite,
bib:sussman1987finite, bib:manual2013vm201}. 

A homogeneous cylinder with a 0.4 meter diameter is pressed between two
frictionless rigid plates. The top plate is pressed downward for a displacement
of 0.2 meter. A schematic of the setup is illustrated in
Fig.~\ref{fig:cylinder_press_schematic}. The material of the cylinder is
modeled as Mooney-Rivlin rubber with constants $C_1 = 0.293~\text{MPa}$, $C_2 =
0.177 ~\text{MPa}$, and bulk modulus $1410~\text{MPa}$. In particular, we use
the Mooney-Rivlin formulation described in \cite{bib:kim2020dynamic} for its
rest stability. The mass density of the material is set to
$1000~\text{kg}/\text{m}^3$ and there is no gravity. As in previous literature,
we consider a plane state of strains. To achieve that in our
inherently 3D formulation, we model the cylinder with two layers of linear
tetrahedral elements along the $z$-direction, assume $\partial/\partial z=0$ in
the computation of the deformation gradient $\bm{F}$, and impose zero out of
plane displacements to avoid drift due to round-off errors. We apply a downward
force on the top plate through a prismatic joint and measure the resulting
displacement. The contact constraints set in as the top plate is pressed
downward and are resolved automatically by our method. We increase the magnitude
of the force linearly in time from zero to $f_{\text{max}} =
1.4~\text{MN}/\text{m}$, to match \cite{bib:bijelonja2005finite}. To better
compare with static analysis results in the literature, we reduce inertial
effects by overdamping the cylinder with $\beta = 0.5$ and by slowly applying
the load over a period of 500 seconds. We perform a mesh refinement study using
four progressively finer meshes. The force-displacement curves obtained with our
method are shown in Fig.~\ref{fig:force_displacement_plot} along with the
reference solution from \cite{bib:bijelonja2005finite}. As the spatial
resolution is increased, the force-displacement curves converge to the reference
solution.

\begin{figure}[!h]
    \centering
    \adjincludegraphics[width=0.8\columnwidth]{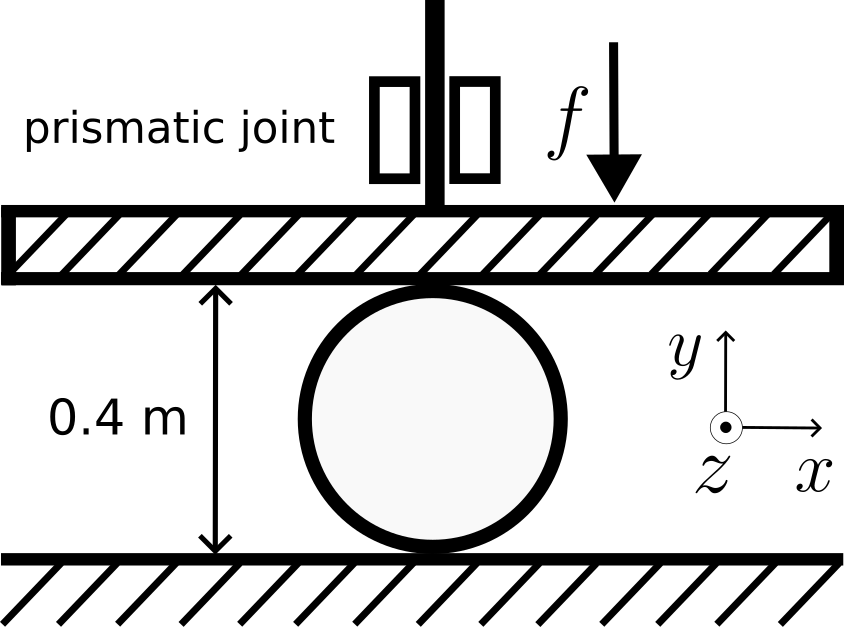}
    \caption{\label{fig:cylinder_press_schematic} A solid deformable cylinder
     0.4 meter in diameter is compressed between a rigid plate and the ground.
     A downward force is applied on the plate through a prismatic joint. Plane
     strain assumption is made in the $xy$-plane.}
\end{figure}

\begin{figure}[!h]
    \centering
    \adjincludegraphics[width=0.8\columnwidth, trim={0 {0.0\width} {0.05\width} {0.04\width}},clip]{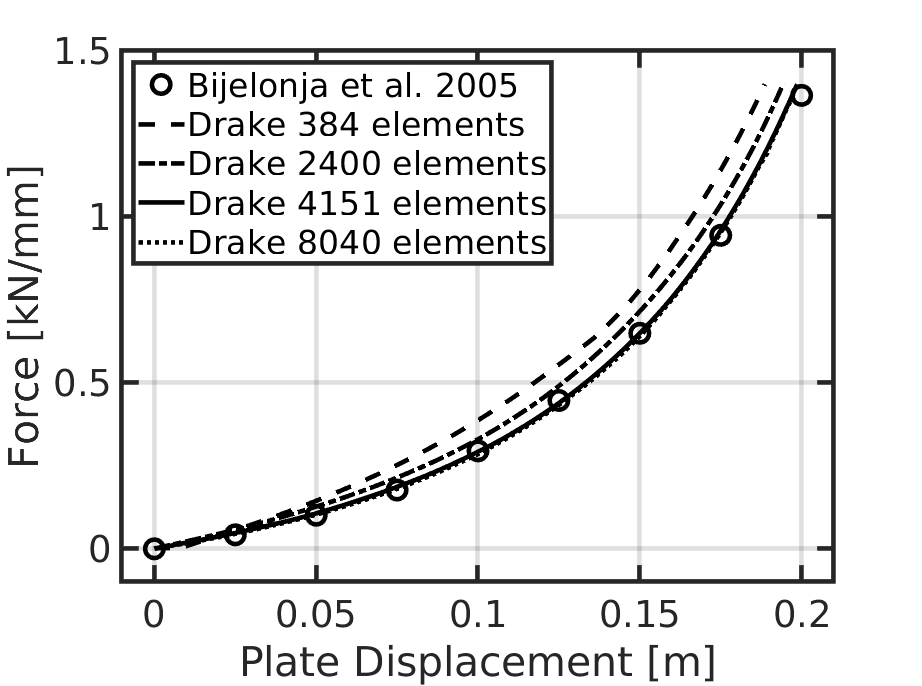}
    \caption{\label{fig:force_displacement_plot} Force per unit length
      vs. displacement for the cylinder press benchmark illustrated in
      Fig.~\ref{fig:cylinder_press_schematic}. The force-displacement curves
      converge under mesh refinement.}
\end{figure}

\subsection{Masonry Arch}
We test the robustness of our method on the frictionally dependent stable arch
tests from \cite{bib:Kaufman2008,bib:li2020ipc,bib:macklin2019nonsmooth}. In
particular, we take the additionally challenging setup from \cite{bib:li2020ipc}
where the arch has its base balanced on sharp edges. Similar to
\cite{bib:li2020ipc}, we simulate the blocks in the arch as very stiff
deformable materials with a high Young's modulus of $20~\text{GPa}$, Poisson's
ratio $0.3$, and density $2300~\text{kg}/\text{m}^3$. While non-linear models
often struggle to converge in presence of such stiff materials due to
ill-conditioning, our method solves the free-motion momentum balance in
Eq.~\eqref{eq:free_motion_velocities} to machine precision with a single linear solve
thanks to our linearized corotated model. Our method is able to stably resolve
the frictional contacts among the blocks at $40~\text{ms}$ time step. We
simulate the structure for 10 minutes and verify the blocks with friction
coefficient $\mu = 1.0$ form a long-term stable arch. We also confirm that the
structure falls apart with a lower friction coefficient $\mu = 0.2$. Our method
has guaranteed convergence and therefore simulation results always satisfy
momentum balance, the model of Coulomb friction, and the maximum dissipation
principle.

\begin{figure}[!h]
    \centering
    \adjincludegraphics[width=1.0\columnwidth]{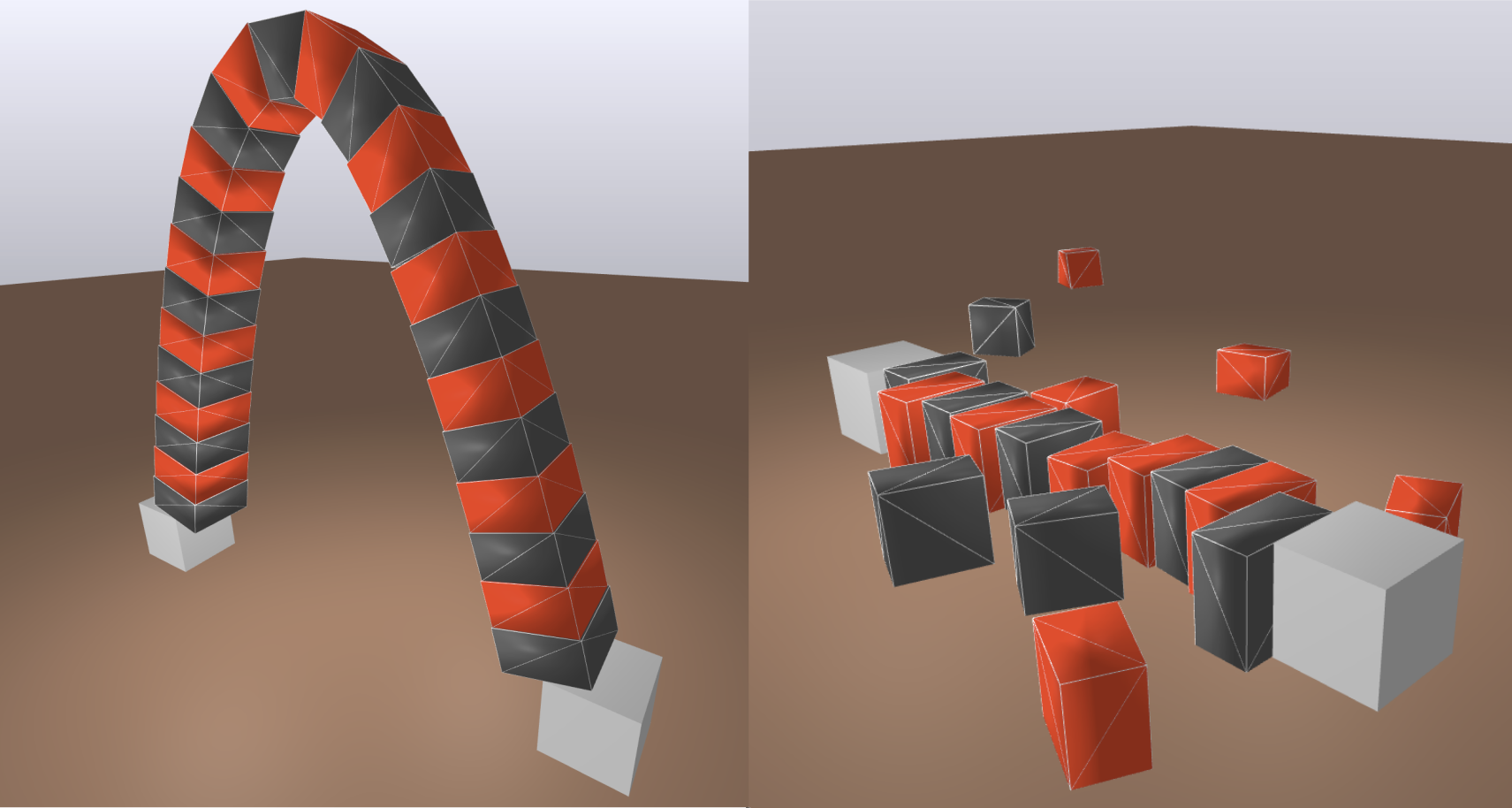}
    \caption{\label{fig:arch} A self-supporting masonry arch. Each block is
     modeled as a stiff deformable object to stress test the stability of our
     method and showcase its ability to resolve contact constraints among
     deformable bodies. With $\mu = 1.0$, the structure is held stable by
     friction (left) whereas with $\mu = 0.2$, the structure eventually falls
     apart (right).}
\end{figure}

\subsection{Soft-bubble Gripper}
\label{sec:bubble_gripper}
We simulate a \textit{Soft-bubble} gripper \cite{bib:kuppuswamy2020soft}
manipulating a deformable teddy bear, Fig. \ref{fig:teddy}. This
is a Schunk WSG 50 gripper with air filled rubber chambers as fingers providing
highly compliant gripping surfaces. We model the air-inflated latex membranes as
deformable volumetric objects with Young's modulus $50~\text{kPa}$ and Poisson's
ratio $0.3$. The flat base of the bubbles are attached to the gripper fingers
using holonomic constraints. The teddy bear is modeled with Young's modulus
$10~\text{kPa}$ and Poisson's ratio $0.4$. The  gripper is PD-controlled with a
prescribed \textit{close-lift-shake-place} motion sequence. We perform the
shaking motion to stress test the stability of the grasp in simulation, see the
supplemental video. The entire system with 2043 degrees of freedom is highly
coupled via up to 464 constraints. Our solver is able to solve all time steps of
this challenging scenario to convergence.
\begin{figure}[!h]
    \centering
    \adjincludegraphics[width=0.8\columnwidth]{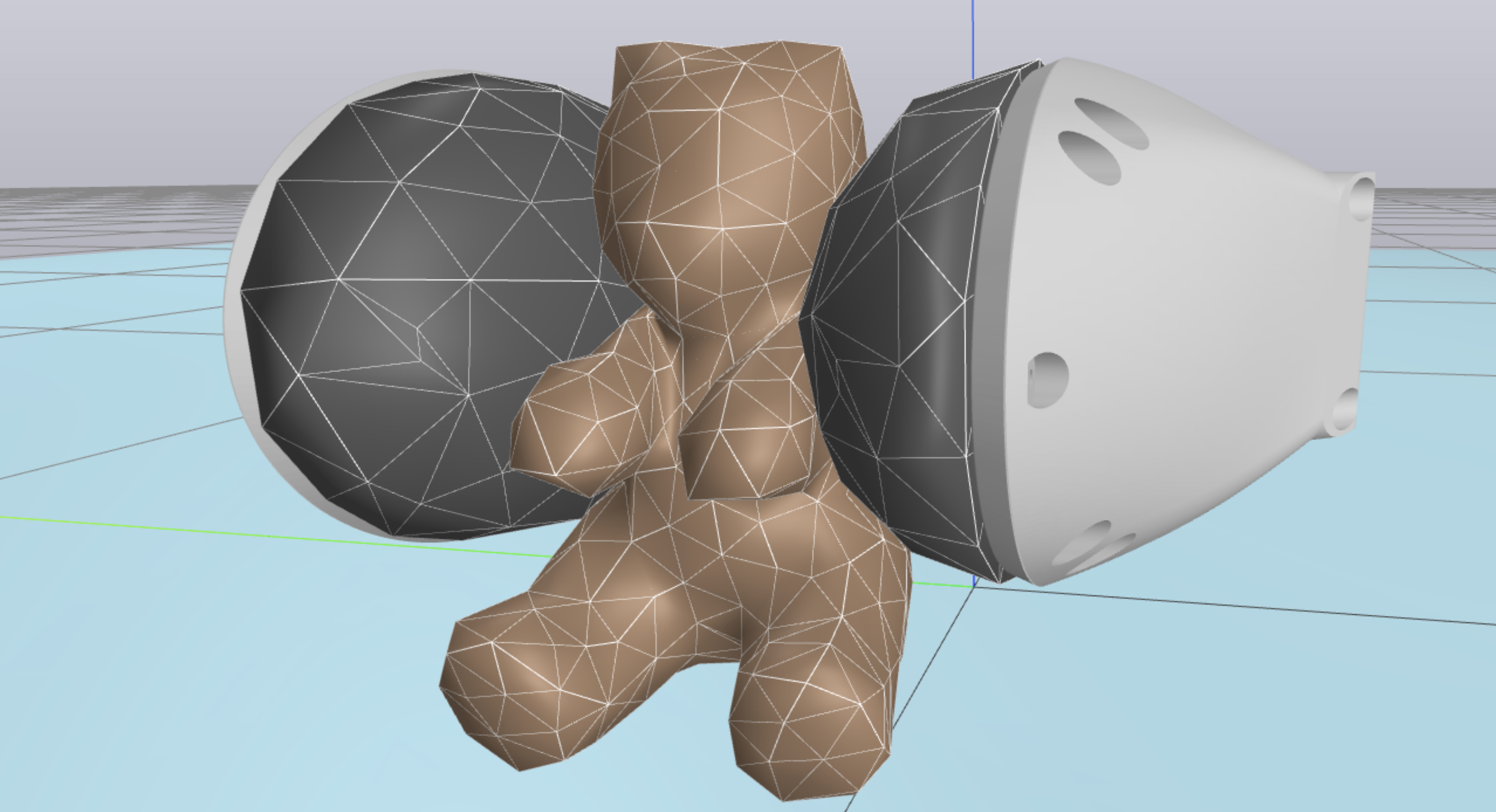}
    \caption{Highly compliant \label{fig:teddy} \emph{Soft-bubble} gripper
    \cite{bib:kuppuswamy2020soft} securing the grasp of a deformable teddy
    bear.}
\end{figure}

\subsection{Interactive Control}
\label{sec:allegro_hand}
A KUKA LBR iiwa arm (7 DoFs) outfitted with an anthropomorphic Allegro hand (16
DoFs) is teleoperated in real-time to manipulate a deformable ball, Fig.
\ref{fig:allegro}. Desired joint positions and velocities are computed using a
differential inverse kinematics controller that targets a desired end effector
pose, controlled interactively with a gamepad. The deformable ball is modeled
with Young’s modulus $25~\text{kPa}$ and Poisson’s ratio $0.4$. As the robot
closes its palm around the ball, contact patches between the ball and the palm,
phalanges, and fingertips of the hand form a secure grasp. We apply a shaking
motion to stress test the stability of the grasp, see supplemental video. Even
with the overhead of the controller, our simulation runs at real-time rates.

\section{Limitations and Future Work}

We summarize the limitations we have identified for our method and propose
future research directions.

\textbf{Parallel implementation:} 
With a focus on accuracy and algorithms, our implementation for this work is
serial. Some of the most time-consuming routines in our method can easily
benefit from a parallel implementation, while the same is not obvious for the
SAP solver and the Schur complement computation. Leveraging the power of
parallelization on modern hardware for these computations is an interesting area
for future investigation.

\textbf{Rotational invariance:} 
As with all other linear constitutive models, our linearized model with lagged
rotational component is not rotationally invariant. Thus it is not suitable for
simulation of extreme deformations using large time steps. For those scenarios,
we fall back to traditional nonlinear models with Hessian positive definite
corrections proposed in \cite{bib:teran2005robust}.

\textbf{Self-contact:} 
We do not consider self-contact at the moment due to the lack of support by our
geometry engine. Self-contact can be incorporated into our method by updating the
geometry engine to augment the set of contacts reported.

\textbf{Tunneling at high speeds:} Though our method has a lower computational
cost, it could benefit from continuous collision detection strategies
\cite{bib:li2020ipc} to provide constraints before contact is established. This
would allow to mitigate issues such as objects tunneling past each other at high
speeds. Efficient solution to mitigate this issue is a topic of active research
for the authors.

\textbf{Redundant constraints:} Our geometry engine often introduces a large
number of constraints to resolve contact. Similarly, welding a large number of
deformable mesh vertices to a rigid body (as done in Section
\ref{sec:bubble_gripper}) introduces many constraints. Even though our SAP
solver \cite{bib:castro2022unconstrained} provides existence and uniqueness
guarantees, a large number of constraints hurts performance as can be observed
in the \emph{Soft-bubble} example. We are currently investigating strategies to
significantly reduce the number of constraints without sacrificing accuracy.

\section{Conclusions}

We present what we believe is the first convex formulation of bodies undergoing
large deformations coupled with articulated rigid bodies through contact and
holonomic constraints.

To achieve this, we introduce a material model in terms of a linearized Green
strain and demonstrate that our formulation is linear with a positive
semi-definite stiffness matrix when the rotational component of the deformation
gradients is lagged. This allows us to incorporate the modeling of large
deformations into our previous work on the convex formulation of contact
\cite{bib:castro2022unconstrained}. We exploit the structure of the problem in
two ways. Firstly, we partition the problem and express it as a smaller
optimization program that only includes constrained variables. Secondly, we show
how the expensive-to-compute Schur complements required in the optimization
problem can be obtained as an intermediate computation in the Cholesky
factorization of the momentum balance equations.

To demonstrate the effectiveness of our method, we present validation results
and simulations relevant to robotics. Specifically, we showcase the ability of
our approach to resolve stable grasps in two challenging manipulation tasks
robustly. 

Unlike other approaches that run a fixed number of iterations to stay within a
specified computational budget, our method always solves the fully constrained
problem to convergence at real-time rates. This is enabled by the strong
convergence guarantees of our method using the SAP solver
\cite{bib:kuppuswamy2020soft}. Therefore, our simulation results always satisfy
the momentum equations and friction laws without introducing difficult-to-detect
artifacts. This aspect is particularly critical for the meaningful
sim-to-real transfer of results.

Profiling reveals that parallelization of geometry and FEM elemental routines is
attractive in cases with a much larger number of degrees of freedom than
constraints. However, cases dominated by the number of constraints are less
amenable to parallelization since they are dominated by the cost of
factorizations and the SAP solver. Our analysis includes a summary of
limitations and future research directions.

Finally, we incorporated our method into the open-source robotics toolkit Drake
\cite{bib:drake} and hope that the simulation and robotics communities can
benefit from our contribution.

\section*{Appendix}
\label{sec:appendix}

\begin{lemma}\label{lemma1}
  For $\mf{M} \in \mathbb{R}^{n\times n}$,
  \begin{equation*}
    \tr(\mf{M}^2) + \tr(\mf{M}^T \mf{M}) \ge 0.
  \end{equation*}
\end{lemma}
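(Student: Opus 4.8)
The plan is to split $\mf{M}$ into its symmetric and skew-symmetric parts and show that the quantity of interest collapses to twice a Frobenius norm, which is manifestly nonnegative. Write $\mf{M} = \mf{S} + \mf{W}$ with $\mf{S} = \tfrac{1}{2}(\mf{M} + \mf{M}^T)$ symmetric and $\mf{W} = \tfrac{1}{2}(\mf{M} - \mf{M}^T)$ skew-symmetric, so that $\mf{M}^T = \mf{S} - \mf{W}$.

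Expanding the two products gives
\begin{align*}
  \mf{M}^2 &= \mf{S}^2 + \mf{S}\mf{W} + \mf{W}\mf{S} + \mf{W}^2,\\
  \mf{M}^T\mf{M} &= \mf{S}^2 + \mf{S}\mf{W} - \mf{W}\mf{S} - \mf{W}^2,
\end{align*}
so that $\mf{M}^2 + \mf{M}^T\mf{M} = 2\mf{S}^2 + 2\mf{S}\mf{W}$. Taking traces, the cross term vanishes: using the cyclic property of the trace together with $\mf{S}^T = \mf{S}$ and $\mf{W}^T = -\mf{W}$, one has $\tr(\mf{S}\mf{W}) = \tr\big((\mf{S}\mf{W})^T\big) = \tr(\mf{W}^T\mf{S}^T) = -\tr(\mf{W}\mf{S}) = -\tr(\mf{S}\mf{W})$, hence $\tr(\mf{S}\mf{W}) = 0$.

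Therefore $\tr(\mf{M}^2) + \tr(\mf{M}^T\mf{M}) = 2\tr(\mf{S}^2) = 2\tr(\mf{S}^T\mf{S}) = 2\|\mf{S}\|_F^2 \ge 0$, which is the claim. There is essentially no real obstacle here — the only point requiring care is the vanishing of the $\tr(\mf{S}\mf{W})$ cross term, which is immediate from the transpose and cyclic identities above. As an alternative one can avoid the decomposition and argue directly in components: $\tr(\mf{M}^2) + \tr(\mf{M}^T\mf{M}) = \sum_{i,j}\big(M_{ij}M_{ji} + M_{ij}^2\big) = \tfrac{1}{2}\sum_{i,j}\big(M_{ij} + M_{ji}\big)^2 \ge 0$, which yields the same conclusion and makes the equality case ($\mf{M}$ skew-symmetric) transparent.
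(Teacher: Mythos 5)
Your proof is correct and follows essentially the same route as the paper: decompose $\mf{M}$ into symmetric and skew-symmetric parts, observe that the cross term vanishes under the trace, and reduce the sum to $2\tr(\mf{S}^2) = 2\|\mf{S}\|_F^2 \ge 0$. The component-wise identity you add at the end is a nice, even more elementary check, but the core argument is identical to the paper's.
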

\begin{proof}
  Let $\mf{H}$ and $\mf{S}$ be the symmetric and skew-symmetric part of $\mf{M}$ respectively. Then
  the desired sum is given by
  \begin{equation*}
  \tr((\mf{M} + \mf{M}^T) \mf{M}) = 2\tr(\mf{H}(\mf{H}+\mf{S})) = 2\tr(\mf{H}^2) \ge 0,
  \end{equation*}
  where the last equality uses the fact that the trace of the product of a symmetric
  matrix and a skew-symmetric matrix is zero.
\end{proof}

\begin{prop}
  The consitutive model defined in Eq.~\eqref{eq:energy_density} is linear in $\mf{q}$
  and the resulting stiffness matrix $\mf{K}$ is positive semi-definite.
\end{prop}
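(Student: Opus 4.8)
The plan is to establish the two claims in turn, exploiting the structural fact that the lagged rotation $\mf{R}_0 = \mf{F}_0$'s rotational factor does \emph{not} depend on the current configuration $\mf{q}$. For linearity, I would first recall that under an FEM interpolation $\bm{\phi}(\bm{X}) = \sum_a N_a(\bm{X})\,\mf{x}_a$ the deformation gradient $\mf{F} = \partial\bm{\phi}/\partial\bm{X} = \sum_a \mf{x}_a\,\nabla N_a^T$ is linear in the vertex positions, i.e.\ in $\mf{q}$ (this uses nothing tetrahedron-specific). Since $\hat{\mf{R}} = \mf{R}_0$ is a fixed matrix, $\hat{\mf{F}} = \mf{R}_0^T\mf{F}$ is linear in $\mf{q}$, so $\mf{B} = \hat{\mf{F}} - \mf{I}$ and $\hat{\mf{E}} = \tfrac12(\mf{B}+\mf{B}^T)$ are affine in $\mf{q}$; hence $\Psi(\hat{\mf{E}})$ of \eqref{eq:energy_density}, being quadratic in $\hat{\mf{E}}$, is quadratic in $\mf{q}$. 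Integrating over the reference domain preserves this, so $E$ is a quadratic function of $\mf{q}$, whence $\mf{f}_e = -\partial E/\partial\mf{q}$ is affine (``linear'') in $\mf{q}$ and $\mf{K} = -\partial\mf{f}_e/\partial\mf{q} = \partial^2 E/\partial\mf{q}^2$ is a constant matrix — the ``analytic form'' alluded to in Section~\ref{sec:corotational_model}.

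For positive semi-definiteness, the key point is that since $E$ is quadratic in $\mf{q}$, for any $\mf{v}$ the form $\mf{v}^T\mf{K}\mf{v}$ equals $\left.\frac{d^2}{ds^2}\right|_{s=0} E(\mf{q}+s\mf{v})$ and is independent of $\mf{q}$. I would compute it by substituting $\hat{\mf{E}}(\mf{q}+s\mf{v}) = \hat{\mf{E}}(\mf{q}) + s\,\tfrac12(\delta\mf{B}+\delta\mf{B}^T)$ with $\delta\mf{B} = \mf{R}_0^T\delta\mf{F}$ and $\delta\mf{F} = \sum_a \mf{v}_a\,\nabla N_a^T$ into \eqref{eq:energy_density} and collecting the $s^2$ term. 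Using the identity $\|\tfrac12(\mf{M}+\mf{M}^T)\|_F^2 = \tfrac12\big(\tr(\mf{M}^2) + \tr(\mf{M}^T\mf{M})\big)$ together with $\tr\big(\tfrac12(\delta\mf{B}+\delta\mf{B}^T)\big) = \tr(\delta\mf{B})$, the per-material-point contribution is
\[
\tfrac{\mu}{2}\big(\tr(\delta\mf{B}^2) + \tr(\delta\mf{B}^T\delta\mf{B})\big) + \tfrac{\lambda}{2}\,\tr(\delta\mf{B})^2 ,
\]
so that $\mf{v}^T\mf{K}\mf{v} = 2\int\big[\tfrac{\mu}{2}(\tr(\delta\mf{B}^2)+\tr(\delta\mf{B}^T\delta\mf{B})) + \tfrac{\lambda}{2}\tr(\delta\mf{B})^2\big]\,d\bm{X}$.

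To conclude, assuming the physical regime $\mu\ge 0$, $\lambda\ge 0$ (Lam\'e parameters with Poisson's ratio in $[0,\tfrac12)$), the $\lambda$-term is manifestly non-negative and the $\mu$-term is non-negative by Lemma~\ref{lemma1} applied with $\mf{M} = \delta\mf{B}$; integrating gives $\mf{v}^T\mf{K}\mf{v}\ge 0$ for all $\mf{v}$, i.e.\ $\mf{K}\succeq 0$. The one place that requires genuine care — and the reason Lemma~\ref{lemma1} is invoked at all rather than a one-line ``norm-squared is non-negative'' remark — is that the $\mu$-term is \emph{not} non-negative term by term: $\tr(\delta\mf{B}^2)$ can be negative (take $\delta\mf{B}$ skew-symmetric), and it is only the combination $\tr(\delta\mf{B}^2)+\tr(\delta\mf{B}^T\delta\mf{B})$, which equals twice the squared Frobenius norm of the symmetric part of $\delta\mf{B}$, that is sign-definite. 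Everything else — writing $\delta\mf{F}$ through shape-function gradients to exhibit the explicit element matrices, and checking that the constant shift $-\mf{I}$ in $\hat{\mf{E}}$ and the lagged $\mf{R}_0$ drop out of the second derivative — is routine bookkeeping that the model was designed to make painless.
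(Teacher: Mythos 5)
Your proposal is correct and follows essentially the same route as the paper's proof: both reduce positive semi-definiteness of $\mf{K}$ to a pointwise quadratic form in the perturbed deformation gradient and invoke Lemma~\ref{lemma1} to handle the $\mu$-term (your choice $\mf{M}=\delta\mf{B}=\hat{\mf{R}}^T\delta\mf{F}$ is equivalent, via orthogonality of $\hat{\mf{R}}$ and cyclicity of the trace, to the paper's $\mf{M}=\mf{dF}\hat{\mf{R}}^T$). The only cosmetic difference is that you compute a second directional derivative of $E$ while the paper writes out the constant Hessian $\partial^2\Psi/\partial F_{ij}\partial F_{kl}$ explicitly and contracts it with $\mf{dF}$.
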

\begin{proof}
  Recall that $\Psi(\hat{\mf{E}}) = \mu \|\hat{\mf{E}}\|_F^2 + \frac{\lambda}{2} \tr(\hat{\mf{E}})^2$ and that $\hat{\mf{E}} = \frac{1}{2}(\hat{\mf{R}}^T\mf{F} + \mf{F}^T\hat{\mf{R}}) - \mf{I}$ for some constant $\hat{\mf{R}}$.
  Differentiating $\Psi$ with respect to $\mf{F}$ twice gives
  \begin{equation*}
   \frac{\partial^2 \Psi}{\partial F_{ij} \partial F_{kl}} = \mu\left(\delta_{ik} \delta_{jl} + \hat{R}_{il}\hat{R}_{kj}\right) + \lambda \hat{R}_{ij}\hat{R}_{kl},
  \end{equation*}
  which is constant since $\hat{\mf{R}}$ is constant and thus the model in Eq.~\eqref{eq:energy_density} is linear.
  For an arbitrary $\mf{dF} \in \mathbb{R}^{3\times3}$, $\mf{dF} : \frac{\partial^2 \Psi}{\partial \mf{F}^2}: \mf{dF}$ is given by
  \begin{align*}
    &\mu\left( dF_{ij} dF_{ij} + dF_{kl}\hat{R}_{il}dF_{ij}\hat{R}_{kj}\right)+ \lambda dF_{ij}\hat{R}_{ij}dF_{kl}\hat{R}_{kl} \\
    &= \mu\left( \tr(\mf{dF}^T\mf{dF}) + \tr(\mf{dF}\hat{\mf{R}}^T\mf{dF}\hat{\mf{R}}^T)\right) + \lambda\tr^2(\mf{dF}\hat{\mf{R}}^T)\\
    &\ge 0.
  \end{align*}
  The $\lambda$ term is clearly non-negative and the $\mu$ term is non-negative by applying Lemma \ref{lemma1} with $\mf{M} = \mf{dF}\hat{\mf{R}}^T$.
  Consequently, for an arbitrary $\mf{dq} \in \mathbb{R}^{n_v}$,
  \begin{align*}
    \mf{dq}^T~\mf{K}~\mf{dq} =&~\mf{dq}^T~\frac{\partial^2 E}{\partial \mf{q}^2}~\mf{dq} \\
                           =& \int \left(\frac{\partial \mf{F}}{\partial \mf{q}} \mf{dq}\right) : \frac{\partial^2 \Psi}{\partial \mf{F}^2} : \left(\frac{\partial \mf{F}}{\partial \mf{q}} \mf{dq}\right) d\bm{X}
                           \ge 0,
  \end{align*}
  and thus $\mf{K}$ is positive semi-definite.
\end{proof}





\section*{Acknowledgment}
We thank the Dynamics \& Simulation and Dexterous Manipulation teams at
TRI for their continuous patience and support. We thank Frank Permenter for
helpful discussion and Damrong Guoy for geometry support.

\bibliographystyle{IEEEtran} 
\bibliography{root}

\end{document}